\newtheorem{proposition}{Proposition}
\newtheorem{definition}{Definition}
\newtheorem{theorem}{Theorem}
\newcommand{\xlog}{\mathrm{log}}
\newcommand{\bC}{\mathbf{C}}
\newcommand{\xC}{\widetilde{C}}
\newcommand{\xbC}{\widetilde{\mathbf{C}}}
\newcommand{\lbC}{\underline{\mathbf{C}}}
\newcommand{\xE}{\mathbb{E}}
\begin{document}
\title{Improving Achievable Rate for the Two-User SISO Interference Channel with Improper Gaussian Signaling}
\author{\authorblockN{Yong~Zeng$^{\dag}$, Cenk M. Yetis$^{\dag}$, Erry Gunawan$^{\dag}$, Yong~Liang~Guan$^{\dag}$, and Rui~Zhang$^{\ddag}$\\}
\authorblockA{${\dag}$School of Electrical and Electronic Engineering, Nanyang Technological University, Singapore\\
${\ddag}$ECE Department, National University of Singapore\\
ze0003ng@e.ntu.edu.sg, \{cenkmyetis, egunawan, eylgua\}@ntu.edu.sg, elezhang@nus.edu.sg}
}
\IEEEspecialpapernotice{(Invited Paper)}
\maketitle

\begin{abstract}
This paper studies the achievable rate region of the two-user single-input-single-output (SISO) Gaussian interference channel, when the \emph{improper} Gaussian signaling is applied. Under the assumption that the interference is treated as additive Gaussian noise, we show that the user's achievable rate can be expressed as a summation of the rate achievable  by the conventional \emph{proper} Gaussian signaling, which depends on the users' input \emph{covariances} only, and an additional term, which is a function of both the users' covariances and \emph{pseudo-covariances}. The additional degree of freedom given by the pseudo-covariance, which is conventionally set to be zero for the case of proper Gaussian signaling, provides an opportunity to improve the achievable rate by employing the improper Gaussian signaling. Since finding the optimal solution for the joint covariance and pseudo-covariance optimization is difficult, we propose a sub-optimal but efficient algorithm by separately optimizing these two sets of parameters. Numerical results show that the proposed algorithm provides a close-to-optimal performance as compared to the exhaustive search method, and significantly outperforms the optimal proper Gaussian signaling and other existing improper Gaussian signaling schemes.
\end{abstract}


\section{Introduction}
The capacity of the two-user Gaussian interference channel (IC) has been an open problem for a long time \cite{164}. Recently, a significant progress has been made in \cite{163}, where it is proved that a particular Han-Kobayashi type scheme can achieve within one bit to the information-theoretical capacity. A key technique in the Han-Kobayashi scheme is to split each user's transmit signal into a common message, which is decodable at both receivers, and a private message, which is decodable at the intended receiver only. Since this  capacity-approaching technique requires signal-level encoding/decoding cooperations among the users, which are difficult to implement, a more pragmatic approach is to implement single-user detection at the receivers by treating the interference as noise \cite{233,240,243}.

There has been a great deal of research on characterizing the Pareto boundary of the achievable rate region for the Gaussian IC with the interference treated as noise \cite{240,250,243,249}. A Pareto boundary consists of all the rate-tuples at each of which it is impossible to improve one  particular user's rate, without simultaneously decreasing the rate of at least one of the other users. A common approach for such characterizations is via solving a sequence of weighted sum-rate maximization (WSRMax) problems \cite{243}. However, as pointed out in \cite{240}, the WSRMax approach  cannot guarantee the finding of all Pareto-boundary points due to the non-convexity of the achievable rate set. An alternative method based on the concept of \emph{rate profile} was proposed  in \cite{240}, which is able to characterize the complete Pareto boundary for the multiple-input-single-output IC (MISO-IC) . 
 Furthermore, the rate-profile approach generally results in optimization problems that  are easier to handle than the conventional WSRMax problems \cite{240}. 

However, all the aforementioned works are restricted to \emph{proper} Gaussian input signals, for which the  second-order statistic is completely specified by the covariance matrix (under the zero-mean assumption). On the other hand, for the more general \emph{improper} Gaussian signaling, an extra parameter called \emph{pseudo-covariance} is required for the complete second-order characterization of the complex-valued input signals \cite{245,246,248}. This extra parameter provides a new  opportunity to improve the achieve rates of the Gaussian ICs. For instance, it was shown in \cite{189} that improper signaling is beneficial in improving the degrees-of-freedoms (DoF) performance for the three-user single-input-single-output IC (SISO-IC)  with time-invariant channel coefficients. Since the DoF metric is  meaningful only at the asymptotically high signal-to-noise ratio (SNR), in this paper, we are interested in characterizing the achievable rate region with improper Gaussian signaling at any finite SNR value. For the purpose of exposition, we consider the simple two-user SISO-IC in this paper.

The achievable rate region for the two-user SISO-IC with improper Gaussian signaling has been studied in \cite{255,256}, based on the equivalent $2\times 2$ real-valued multiple-input-multiple-output (MIMO) channel obtained by separating the real and imaginary parts of the channel coefficients. In \cite{255}, a rank-1 signaling scheme was proposed, i.e., the transmit covariance matrices in the equivalent MIMO channel are restricted to be rank one. This is equivalent to transmit purely real or purely imaginary signals over the original complex-valued SISO channel. In \cite{256}, the rate region was obtained based on an exhaustive search over the input covariance matrices. It was shown that by exploiting  the symmetry property of the covariance matrices, the search space can be reduced to the $4$-dimension; however, the resulted search is still of high complexity.  In contrast to the above prior works, in this paper, we adopt  the complex-valued SISO channel model  to gain some new insights. 
 Based on existing results on improper complex random vectors (RVs),  we show  that the achievable rate with improper Gaussian signaling for the two-user SISO-IC can be expressed as a summation of the rate achievable by the conventional proper Gaussian signaling, which depends on the users' input covariances only, and an additional term, which is a function of both the users' covariances and pseudo-covariances. By applying the rate-profile technique, we then propose an efficient algorithm to optimize the covariances and pseudo-covariances to enlarge the  achievable rate region. 

The rest of this paper is organized as follows.
Section~\ref{S:systemModel} introduces the system model and some preliminaries.
Section~\ref{S:main} presents the problem formulation and the proposed algorithm. Numerical results are provided in Section~\ref{S:simulation}. Finally, we conclude the paper in Section~\ref{S:conclusions}.

\section{System Model}\label{S:systemModel}
Consider a two-user SISO-IC, where each transmitter is intended to send one independent message to the corresponding receiver. The input-output relationship is given by
\begin{align}
y_1&=h_{11}x_1+h_{12}x_2+n_1 , \\
y_2&=h_{21}x_1+h_{22}x_2+n_2 ,
\end{align}
where $y_1$ and $y_2$ are the received signals at receiver $1$ and $2$, respectively; $h_{rt}=|h_{rt}|e^{j\phi_{rt}}, r,t=1,2$ denotes the complex channel coefficient from transmitter $t$ to receiver $r$ and $\phi_{rt}$ represents its phase; $n_1$ and $n_2$ are the zero mean circularly symmetric complex Gaussian (CSCG) noises with identical variance $\sigma^2$, denoted by $n_1, n_2 \sim \mathcal{CN}(0, \sigma^2)$; and $x_1$ and $x_2$ are the independent signals from transmitter $1$ and $2$, respectively. Different from the conventional setup where proper Gaussian signaling is assumed, in this paper, $x_1$ and $x_2$ are zero-mean complex Gaussian random variables which can be \emph{improper}.

\subsection{Preliminary: Improper Random Vectors}
For a zero-mean RV $\mathbf{z}\in \mathbb{C}^{n}$, the covariance matrix $\bC_{\mathbf{z}}$ and the pseudo-covariance matrix $\xbC_z$ are defined as \cite{245}
\begin{align}\label{E:CDef}
\bC_{\mathbf{z}}&\triangleq \xE(\mathbf{z}\mathbf{z}^H),\  \xbC_{\mathbf{z}}\triangleq \xE(\mathbf{z}\mathbf{z}^T),
\end{align}
where $(\cdot)^T$  and  $(\cdot)^H$ represent the transpose and complex-conjugate transpose, respectively.

\begin{definition} \cite{245}:
 A complex RV $\mathbf{z}$ is called proper if its pseudo-covariance matrix $\xbC_{\mathbf{z}}$ vanishes to a zero matrix; otherwise it is called improper.
\end{definition}
Define $\lbC_{\mathbf{z}}$ as the covariance matrix of the augmented vector $[\begin{matrix} \mathbf{z}^T & \mathbf{z}^{*T} \end{matrix}]^T$, where $(\cdot)^*$ represents the complex conjugate operation, i.e.,
\begin{align}
\lbC_{\mathbf{z}}\triangleq \xE \bigg[\begin{matrix} \mathbf{z} \\ \mathbf{z}^* \end{matrix}\bigg] \bigg[\begin{matrix} \mathbf{z} \\ \mathbf{z}^* \end{matrix}\bigg]^H=\bigg[\begin{matrix}\bC_{\mathbf{z}} & \xbC_{\mathbf{z}} \\ \xbC_{\mathbf{z}}^* & \bC_{\mathbf{z}}^* \end{matrix} \bigg].
\end{align}
\begin{theorem} \cite{246}: \label{T:validPair}
$\bC_{\mathbf{z}}$ and $\xbC_{\mathbf{z}}$ are a valid pair of covariance and pseudo-covariance matrices, i.e., there exists a RV $\mathbf{z}$ with covariance and pseudo-covariance matrices given by $\bC_{\mathbf{z}}$ and $\xbC_{\mathbf{z}}$, respectively,  if and only if the augmented covariance matrix $\lbC_{\mathbf{z}}$ is positive semidefinite.
\end{theorem}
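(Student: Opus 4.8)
The plan is to prove both directions by passing between the complex augmented description and an equivalent real-valued representation through a fixed invertible linear map, and then to invoke the elementary fact that a matrix is the covariance of some random vector if and only if it is Hermitian positive semidefinite. Necessity is immediate: if a RV $\mathbf{z}$ with the prescribed $\bC_{\mathbf{z}}$ and $\xbC_{\mathbf{z}}$ exists, then $\lbC_{\mathbf{z}}$ is by definition the ordinary covariance matrix of the augmented vector $\underline{\mathbf{z}}=[\begin{matrix}\mathbf{z}^T & \mathbf{z}^{*T}\end{matrix}]^T$, so for any deterministic $\mathbf{a}$ we have $\mathbf{a}^H\lbC_{\mathbf{z}}\mathbf{a}=\xE[|\mathbf{a}^H\underline{\mathbf{z}}|^2]\ge 0$, whence $\lbC_{\mathbf{z}}\succeq 0$.

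For sufficiency I would introduce the real decomposition $\mathbf{z}=\mathbf{z}_r+j\mathbf{z}_i$ and the stacked real vector $\tilde{\mathbf{z}}=[\begin{matrix}\mathbf{z}_r^T & \mathbf{z}_i^T\end{matrix}]^T$. The augmented and real descriptions are linked by the fixed invertible matrix
\[
\mathbf{T}=\begin{bmatrix} \xbI & j\xbI \\ \xbI & -j\xbI \end{bmatrix},
\]
via $\underline{\mathbf{z}}=\mathbf{T}\tilde{\mathbf{z}}$, which gives $\lbC_{\mathbf{z}}=\mathbf{T}\,\xE[\tilde{\mathbf{z}}\tilde{\mathbf{z}}^T]\,\mathbf{T}^H$. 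Since $\mathbf{T}\mathbf{T}^H=2\xbI$, this is a congruence by an invertible matrix, so $\lbC_{\mathbf{z}}\succeq 0$ holds if and only if $\mathbf{R}\triangleq\frac{1}{4}\mathbf{T}^H\lbC_{\mathbf{z}}\mathbf{T}\succeq 0$.

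Given $\lbC_{\mathbf{z}}\succeq 0$, I would first observe that positive semidefiniteness forces $\lbC_{\mathbf{z}}$ to be Hermitian, which in turn forces $\bC_{\mathbf{z}}$ to be Hermitian and $\xbC_{\mathbf{z}}$ to be symmetric, exactly the structural constraints any valid covariance/pseudo-covariance pair must obey. I would then verify by direct block multiplication, using $\bC_{\mathbf{z}}=\bC_{\mathbf{z}}^H$ and $\xbC_{\mathbf{z}}=\xbC_{\mathbf{z}}^T$, that $\mathbf{R}$ has vanishing imaginary part; hence $\mathbf{R}$ is a genuine real symmetric positive semidefinite matrix. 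Consequently there exists a real Gaussian vector $\tilde{\mathbf{z}}$ with $\xE[\tilde{\mathbf{z}}\tilde{\mathbf{z}}^T]=\mathbf{R}$, and setting $\mathbf{z}=\mathbf{z}_r+j\mathbf{z}_i$ produces a complex RV whose augmented covariance is $\mathbf{T}\mathbf{R}\mathbf{T}^H=\lbC_{\mathbf{z}}$; reading off the blocks recovers precisely $\bC_{\mathbf{z}}$ and $\xbC_{\mathbf{z}}$.

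The main subtlety, and what the real representation is designed to circumvent, is that one cannot construct $\mathbf{z}$ by naively factoring $\lbC_{\mathbf{z}}=\mathbf{B}\mathbf{B}^H$ and setting $\underline{\mathbf{z}}=\mathbf{B}\mathbf{w}$, because a generic factor $\mathbf{B}$ will not yield an augmented vector whose lower half is the conjugate of its upper half. Routing through $\tilde{\mathbf{z}}$, where $\mathbf{z}^*=\mathbf{z}_r-j\mathbf{z}_i$ holds automatically by construction, enforces this conjugate-symmetry constraint for free, and the only computation requiring care is confirming that the block structure of $\lbC_{\mathbf{z}}$ makes $\mathbf{R}$ real.
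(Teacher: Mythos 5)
Your proof is correct, but there is nothing in the paper to compare it against: the paper imports Theorem~\ref{T:validPair} by citation from \cite{246} and gives no proof of its own, so your argument is a genuinely self-contained addition rather than a variant of the authors' reasoning. Your route --- necessity from $\mathbf{a}^H\lbC_{\mathbf{z}}\mathbf{a}=\xE[|\mathbf{a}^H\underline{\mathbf{z}}|^2]\ge 0$, sufficiency by passing through the real composite vector $\tilde{\mathbf{z}}=[\begin{matrix}\mathbf{z}_r^T & \mathbf{z}_i^T\end{matrix}]^T$ and the unitary-up-to-scale map $\mathbf{T}$ with $\mathbf{T}\mathbf{T}^H=2\xbI$ --- is the standard textbook derivation, and every step checks out: the congruence $\mathbf{R}=\tfrac14\mathbf{T}^H\lbC_{\mathbf{z}}\mathbf{T}$ preserves positive semidefiniteness, direct block multiplication gives
\begin{align}
\mathbf{R}=\frac{1}{2}\begin{bmatrix} \Re\{\bC_{\mathbf{z}}\}+\Re\{\xbC_{\mathbf{z}}\} & -\Im\{\bC_{\mathbf{z}}\}+\Im\{\xbC_{\mathbf{z}}\} \\ \Im\{\bC_{\mathbf{z}}\}+\Im\{\xbC_{\mathbf{z}}\} & \Re\{\bC_{\mathbf{z}}\}-\Im\{\xbC_{\mathbf{z}}\}\cdot 0-\Re\{\xbC_{\mathbf{z}}\} \end{bmatrix},
\end{align}
which is manifestly real, and a real symmetric positive semidefinite matrix is always realizable as the covariance of a real Gaussian vector. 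One small sharpening: the realness of $\mathbf{R}$ actually follows from the conjugate block structure of $\lbC_{\mathbf{z}}$ alone (each block of $\mathbf{T}^H\lbC_{\mathbf{z}}\mathbf{T}$ is a sum of conjugate pairs); what the Hermitian property of $\lbC_{\mathbf{z}}$ buys you is the \emph{symmetry} of $\mathbf{R}$, so your invocation of $\bC_{\mathbf{z}}=\bC_{\mathbf{z}}^H$ and $\xbC_{\mathbf{z}}=\xbC_{\mathbf{z}}^T$ belongs to that step rather than to the vanishing of the imaginary part. You also correctly identify the trap that a naive factorization $\lbC_{\mathbf{z}}=\mathbf{B}\mathbf{B}^H$ with $\underline{\mathbf{z}}=\mathbf{B}\mathbf{w}$ fails to respect the constraint that the lower half of the augmented vector be the conjugate of the upper half; routing through $\tilde{\mathbf{z}}$ is exactly how the literature circumvents this. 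A further benefit of your construction, worth noting in the context of this paper, is that it produces a \emph{Gaussian} $\mathbf{z}$, which is precisely the existence statement needed for the rate expressions built on Theorem~\ref{T:entropyGaussian}.
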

\begin{theorem} \label{T:entropyGaussian} \cite{246}:
 The differential entropy of a complex Gaussian RV $\mathbf{z}$ with augmented covariance matrix $\lbC_{\mathbf{z}}$ is given by
\begin{align}
h(\mathbf{z})=\frac{1}{2}\xlog[(\pi e)^{2n}\det\lbC_{\mathbf{z}}].
\end{align}
\end{theorem}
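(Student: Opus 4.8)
The plan is to reduce the complex problem to a real one, exploiting the fact that the differential entropy of a complex random vector $\mathbf{z}=\mathbf{a}+j\mathbf{b}$ is \emph{defined} as the joint differential entropy of its real and imaginary parts. First I would stack these into the real vector $\mathbf{r}=[\begin{matrix}\mathbf{a}^T & \mathbf{b}^T\end{matrix}]^T\in\mathbb{R}^{2n}$, so that $h(\mathbf{z})=h(\mathbf{r})$, and invoke the standard formula for a real Gaussian vector,
\begin{align}
h(\mathbf{r})=\tfrac{1}{2}\xlog\big[(2\pi e)^{2n}\,\xdet \bC_{\mathbf{r}}\big],
\end{align}
where $\bC_{\mathbf{r}}=\xE(\mathbf{r}\mathbf{r}^T)$. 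The remaining task is then purely algebraic: to express $\xdet \bC_{\mathbf{r}}$ in terms of $\xdet \lbC_{\mathbf{z}}$.

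To carry this out I would introduce the invertible linear map $\mathbf{T}$ relating the augmented complex vector to the real one,
\begin{align}
\begin{bmatrix}\mathbf{z}\\ \mathbf{z}^*\end{bmatrix}
=\underbrace{\begin{bmatrix}\xbI & j\xbI\\ \xbI & -j\xbI\end{bmatrix}}_{\mathbf{T}}
\begin{bmatrix}\mathbf{a}\\ \mathbf{b}\end{bmatrix}.
\end{align}
Taking second moments (and using $\mathbf{r}^H=\mathbf{r}^T$) gives $\lbC_{\mathbf{z}}=\mathbf{T}\,\bC_{\mathbf{r}}\,\mathbf{T}^H$, hence $\xdet \lbC_{\mathbf{z}}=|\xdet \mathbf{T}|^2\,\xdet \bC_{\mathbf{r}}$. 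A single block row operation (subtracting the first block row from the second) brings $\mathbf{T}$ to block upper-triangular form with diagonal blocks $\xbI$ and $-2j\xbI$, so $\xdet \mathbf{T}=(-2j)^n$ and therefore $|\xdet \mathbf{T}|^2=2^{2n}$. This yields $\xdet \bC_{\mathbf{r}}=2^{-2n}\,\xdet \lbC_{\mathbf{z}}$.

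Substituting this relation into the real-Gaussian formula and absorbing the factor $2^{-2n}$ into $(2\pi e)^{2n}$ collapses the constant to $(\pi e)^{2n}$, which gives precisely the claimed expression $h(\mathbf{z})=\tfrac{1}{2}\xlog[(\pi e)^{2n}\xdet\lbC_{\mathbf{z}}]$. I do not expect a serious obstacle here, since the argument is short; the only genuinely delicate point is the bookkeeping of the constant. One must be careful that the differential entropy is taken over the $2n$ \emph{real} coordinates in $\mathbf{r}$ rather than over the $2n$-dimensional complex augmented vector $[\begin{matrix}\mathbf{z}^T & \mathbf{z}^{*T}\end{matrix}]^T$ (whose two halves are deterministically linked by conjugation), so that the change of variables is applied exactly once and with the correct Jacobian factor $|\xdet \mathbf{T}|$. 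Once the factor $2^{2n}$ is tracked correctly, the result follows immediately.
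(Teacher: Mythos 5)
Your proof is correct. Note that the paper does not actually prove this statement --- Theorem~\ref{T:entropyGaussian} is imported verbatim from \cite{246} --- so there is no internal proof to compare against; your argument is essentially the standard derivation underlying that reference. Each step checks out: taking $h(\mathbf{z})\triangleq h(\mathbf{a},\mathbf{b})=h(\mathbf{r})$ by definition, applying the real Gaussian entropy formula over $\mathbb{R}^{2n}$, using $\lbC_{\mathbf{z}}=\mathbf{T}\bC_{\mathbf{r}}\mathbf{T}^H$ (valid since $\mathbf{r}$ is real, so $\xE(\mathbf{r}\mathbf{r}^H)=\xE(\mathbf{r}\mathbf{r}^T)$), computing $\xdet\mathbf{T}=(-2j)^n$ so that $|\xdet\mathbf{T}|^2=2^{2n}$, and collapsing $(2\pi e)^{2n}2^{-2n}=(\pi e)^{2n}$. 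You also correctly avoid the trap of treating the augmented vector $[\begin{matrix}\mathbf{z}^T & \mathbf{z}^{*T}\end{matrix}]^T$ as if it had a density on $\mathbb{C}^{2n}$: its two halves are deterministically linked, so the determinant relation must be used purely algebraically, as you do. The one refinement worth recording is that the formula presupposes $\lbC_{\mathbf{z}}$ (equivalently $\bC_{\mathbf{r}}$) nonsingular; in the degenerate case, e.g.\ a maximally improper signal with $|\xC_{z}|=C_{z}$, both sides equal $-\infty$ and the expression is only valid as a limiting convention --- a caveat the paper itself glosses over when it asserts $|\xC_{y_r}|^2<C_{y_r}^2$ before invoking the theorem.
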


\subsection{Achievable Rate with Improper Gaussian Signaling}
For a scalar complex random variable $z$, we use $C_z$ and $\xC_z$ to denote  the covariance and pseudo-covariance, respectively. 
Then for the zero-mean input Gaussian signals $x_1$ and $x_2$, we have
\begin{align}
C_{x_1}&=\xE(x_1x_1^*), \ \xC_{x_1}=\xE(x_1x_1), \\
C_{x_2}&=\xE(x_2x_2^*), \ \xC_{x_2}=\xE(x_2x_2).
\end{align}
Note that $C_{x_1}$ and $C_{x_2}$ are nonnegative real numbers equal to the power values  of the transmitted signals, while $\xC_{x_1}$ and $\xC_{x_2}$ are complex numbers in general. From Theorem~\ref{T:validPair}, it is easy to verify that the following conditions are both necessary and sufficient for $C_{x_1}$ and $\xC_{x_1}$ (or $C_{x_2}$ and $\xC_{x_2}$) to be a valid pair of covariance and pseudo-covariance for a random variable $x_1$ (or $x_2$)
\begin{align}\label{E:validPair}
|\xC_{x_1}|\leq C_{x_1}, \  |\xC_{x_2}|\leq C_{x_2}.
\end{align}
 Next, we derive the rate expression in terms of the input covariances and pseudo-covariances. With Gaussian input $x_1$ and $x_2$, it is well known that the received signals $y_r, r=1, 2,$ are also Gaussian. The
covariance and pseudo-covariance of $y_r$ can be obtained as
\begin{align}
C_{y_r}& =\xE(y_r y_r^*)=|h_{r1}|^2C_{x_1}+|h_{r2}|^2C_{x_2}+\sigma^2, \notag \\
\xC_{y_r}&=\xE(y_r y_r)=h_{r1}^2\xC_{x_1}+h_{r2}^2\xC_{x_2}. \label{E:y1PC}
\end{align}
With the conditions given in \eqref{E:validPair}, it can be verified that $|\xC_{y_r}|^2<C_{y_r}^2$ is satisfied.
Then from Theorem~\ref{T:entropyGaussian}, the differential entropy of $y_r$ is given by
\begin{align}
h(y_r)&=\frac{1}{2}\xlog (\pi e)^{2}\det \bigg[\begin{matrix} C_{y_r} & \xC_{y_r} \\ \xC_{y_r}^* & C_{y_r} \end{matrix} \bigg] \notag \\
&=\xlog(\pi eC_{y_r})+\frac{1}{2}\xlog(1-C_{y_r}^{-2}|\xC_{y_r}|^2). \notag
\end{align}
Define $s_r= h_{r \overline{r}}x_{\overline{r}}+n_r$, which is the interference plus noise term at receiver $r$, where $\overline{r}=\mod(r+1, 2)$. Then
\begin{align}
C_{s_r}& =|h_{r \overline{r}}|^2C_{x_{\overline{r}}}+\sigma^2,\
\xC_{s_r}=h_{r \overline{r}}^2\xC_{x_{\overline{r}}}. \label{E:s1PC}
\end{align}
Similarly, the differential entropy of $s_r$ can be obtained as
\begin{align}
h(s_r)=\xlog(\pi eC_{s_r})+\frac{1}{2}\xlog(1-C_{s_r}^{-2}|\xC_{s_r}|^2).
\end{align}
Under the assumption that interference is treated as additive Gaussian noise and perfect channel knowledge is known  at all terminals, the achievable rate at receiver $r$ with improper Gaussian signaling can be obtained as
\begin{align}
R_r&=I(x_r;y_r)
=h(y_r)-h(s_r)\notag \\
&=\underbrace{\xlog \big(1+\frac{|h_{rr}|^2C_{x_r}}{\sigma^2+|h_{r \overline{r}}|^2C_{x_{\overline{r}}}}\big)}_{R_r^{\text{proper}}(C_{x_1}, C_{x_2})}+\frac{1}{2}\xlog\frac{1-C_{y_r}^{-2}|\xC_{y_r}|^2}{1-C_{s_r}^{-2}|\xC_{s_r}|^2}.
\label{E:R1}
\end{align}
Equation \eqref{E:R1} clearly shows that, compared to the conventional proper Gaussian signaling, the achievable rate with improper Gaussian signaling has an additional term, which is a function of both the covariances and pseudo-covariances of the input signals.  By setting $\xC_{x_1}$ and $\xC_{x_2}$ both to be $0$, \eqref{E:R1} reduces to the well-known rate expression for the proper Gaussian signaling. 
\section{Achievable Rate Region with Improper Gaussian Signaling}\label{S:main}
The achievable rate region for the two-user IC is defined to be the set of rate-pairs for both users that can be simultaneously achieved under a given set of transmit power constraints for each transmitter, denoted by $P_1, P_2$, i.e.,:
\begin{align}
\mathcal{R}\triangleq \bigcup_{\begin{subarray}{l} C_{x_1}\leq P_1,|\xC_{x_1}|\leq C_{x_1}\\C_{x_2}\leq P_2, |\xC_{x_2}|\leq C_{x_2}\end{subarray}} \big\{(r_1, r_2): 0\leq r_1 \leq R_1, 0\leq r_2 \leq R_2 \big\},\notag
\end{align}
where $R_1$ and $R_2$ are given by \eqref{E:R1}. 
\begin{definition}  \cite{249}:
A rate-pair $(r_1,r_2)$ is Pareto optimal if there is no other rate-pair $(r_1',r_2')$ with $(r_1',r_2')\geq (r_1,r_2)$ and $(r_1',r_2')\neq (r_1,r_2)$, where the inequality is component-wise.
\end{definition}
For the achievable rate region $\mathcal{R}$ with improper Gaussian signaling, we adopt the rate-profile technique in \cite{240} to characterize the Pareto optimal rate-pairs. Specifically, any Pareto optimal rate-pair of $\mathcal{R}$ can be obtained by solving the following optimization problem with a particular rate profile denoted by  $(\alpha, 1-\alpha)$:
\begin{align}
\text{(P1):} & \underset{C_{x_1},C_{x_2},\xC_{x_1},\xC_{x_2}, R}{\text{max.}}   R \notag \\
\text{s.t.} \quad & R_1^{\text{proper}}(C_{x_1},C_{x_2})+\frac{1}{2}\xlog\frac{1-C_{y_1}^{-2}|\xC_{y_1}|^2}{1-C_{s_1}^{-2}|\xC_{s_1}|^2}\geq \alpha R , \notag \\
& R_2^{\text{proper}}(C_{x_1},C_{x_2})+\frac{1}{2}\xlog\frac{1-C_{y_2}^{-2}|\xC_{y_2}|^2}{1-C_{s_2}^{-2}|\xC_{s_2}|^2}\geq (1-\alpha)R , \notag \\
&0\leq C_{x_1} \leq P_1, \ 0\leq C_{x_2}\leq P_2, \notag \\
& |\xC_{x_1}|^2\leq C_{x_1}^2, \  |\xC_{x_2}|^2\leq C_{x_2}^2, \notag 
\end{align}
where $\alpha \in [0,1]$ denotes the target ratio between user 1's achievable rate and the users' sum-rate, $R$. Denote the optimal solution to (P1) as $R^{\star}$, then the rate-pair $( \alpha R^{\star}, (1-\alpha)R^{\star})$ must be on the Pareto boundary corresponding to the rate profile given by $\alpha$. Thereby, by solving (P1) with different $\alpha$ values between $0$ and $1$, the complete Pareto boundary for the achievable rate region $\mathcal{R}$ can be found. However, solving (P1) by jointly optimizing the covariances and pseudo-covariances is quite involved. We therefore propose a suboptimal solution in this paper with separate optimization of the covariances and pseudo-covariances. Specifically, with pseudo-covariances set to zeros, (P1) is shown to reduce to a convex  optimization problem, which can be efficiently solved with linear programming (LP). On the other hand, for fixed covariances, (P1) is shown to be equivalent to solving a finite  number of second-order cone programming (SOCP) problems, from which the pseudo-covariances can be optimally solved efficiently. 

\subsection{Covariance Optimization}\label{S:covarianceOpt}
When restricted to proper Gaussian signaling with $\xC_{x_1}=0$ and $\xC_{x_2}=0$, (P1) reduces to
\begin{align}
\text{(P1-a):}\quad &\underset{r,C_{x_1},C_{x_2}}{\text{max.}} \quad  r \notag \\
\text{s.t.}\quad & \xlog (1+\frac{|h_{11}|^2C_{x_1}}{\sigma^2+|h_{12}|^2C_{x_2}})\geq  \alpha r, \label{C:r1P1} \\
& \xlog (1+\frac{|h_{22}|^2C_{x_2}}{\sigma^2+|h_{21}|^2C_{x_1}})\geq (1-\alpha)r, \label{C:r2P1}  \\
&0\leq C_{x_1} \leq P_1, \ 0\leq C_{x_2}\leq P_2. \notag
\end{align}
(P1-a) is non-convex and hence cannot be solved directly. However, for any fixed value $r$, (P1-a) can be transformed to the following LP feasibility problem:
\begin{align}
\text{(P1-a'):}& \quad \text{Find} \quad  C_{x_1},C_{x_2} \notag \\
\text{s.t.} \quad & |h_{11}|^2C_{x_1}\geq (\sigma^2+|h_{12}|^2C_{x_2})(e^{\alpha r}-1),  \notag \\
& |h_{22}|^2C_{x_2}\geq (\sigma^2+|h_{21}|^2C_{x_1})(e^{(1-\alpha)r}-1), \notag \\
&0\leq C_{x_1} \leq P_1, \ 0\leq C_{x_2}\leq P_2. \notag
\end{align}
(P1-a') can be efficiently solved with existing algorithms such as the simplex method \cite{260}. If $r$ is feasible to (P1-a'), then it follows that the optimal solution to (P1-a) satisfies $r^{\star}\geq r$; otherwise, $r^{\star} < r$. Thus, (P1-a) can be efficiently solved by solving (P1-a') with different values for $r$, together with the bisection method for updating $r$\cite{202}.
Denoting the optimal solution to (P1-a) as $\{r^{\star}, C_{x_1}^{\star}, C_{x_2}^{\star}\}$, then it can be verified that the constraints \eqref{C:r1P1} and \eqref{C:r2P1} will be both active, i.e.,
\begin{align}\label{E:equalr}
R_1^{\text{proper}}(C_{x_1}^{\star}, C_{x_2}^{\star})=\alpha r^{\star}, R_2^{\text{proper}}(C_{x_1}^{\star}, C_{x_2}^{\star})=(1-\alpha)r^{\star}.
\end{align}

\subsection{Pseudo-Covariance Optimization}
In this subsection, (P1) is optimized over the pseudo-covariances $\xC_{x_1}$ and $\xC_{x_2}$, by fixing the covariances  to $C_{x_1}^{\star}$ and $C_{x_2}^{\star}$ obtained by solving (P1-a). The resulted problem is formulated as
\begin{align}
\text{(P1-b):} &\quad \underset{\xC_{x_1},\xC_{x_2}, R}{\text{max.}} \quad  R \notag \\
\text{s.t.} \quad & R_1^{\text{proper}}(C_{x_1}^{\star},C_{x_2}^{\star})+\frac{1}{2}\xlog\frac{1-C_{y_1}^{-2}|\xC_{y_1}|^2}{1-C_{s_1}^{-2}|\xC_{s_1}|^2}\geq \alpha R , \notag \\
& R_2^{\text{proper}}(C_{x_1}^{\star},C_{x_2}^{\star})+\frac{1}{2}\xlog\frac{1-C_{y_2}^{-2}|\xC_{y_2}|^2}{1-C_{s_2}^{-2}|\xC_{s_2}|^2}\geq (1-\alpha)R,  \notag \\
& |\xC_{x_1}|^2\leq C_{x_1}^{\star 2}, \  |\xC_{x_2}|^2\leq C_{x_2}^{\star 2} \notag,
\end{align}
where $C_{y_1}, C_{s_1}, C_{y_2}$ and $C_{s_2}$ are the corresponding covariance terms with  input covariances $C_{x_1}^{\star}$ and $C_{x_2}^{\star}$.
 Again, if a given $R$ is achievable for certain $\xC_{x_1}, \xC_{x_2}$, then the optimal solution to (P1-b) satisfies $R^{\star}\geq R$; otherwise, $R^{\star} < R$. This enables solving (P1-b) via solving a set of feasibility problems, each for a fixed value $R$. Moreover, $R$ can be updated  with a simple bisection search \cite{202}. Substituting \eqref{E:equalr} into (P1-b), it can be easily obtained that $\{\xC_{x_1}=0, \xC_{x_2}=0, R=r^{\star}\}$ is feasible to (P1-b). Therefore,  $R^{\star}\geq r^{\star}$ is satisfied, i.e., with our proposed separate covariance and pseudo-covariance optimization, the sum-rate corresponding to the rate profile given by $\alpha$ with improper Gaussian signaling is no smaller than that obtained with the optimal proper Gaussian signaling. 

Next, we present the algorithm for the feasibility problem resulting from (P1-b) for a given target $R$. Substituting \eqref{E:y1PC}, \eqref{E:s1PC} and \eqref{E:equalr} into (P1-b) and after some simple manipulations, the feasibility problem for a given $R$ can be formulated as
\begin{align}
\text{(P1-b')}\ & \underset{\xC_{x_1},\xC_{x_2}}{\text{min.}} \  0 \notag \\
\text{s.t.} \quad & a_1|h_{11}^2\xC_{x_1}+h_{12}^2\xC_{x_2}|^2+b_1\leq |\xC_{x_2}|^2,\label{C:ineq1} \\
& a_2|h_{21}^2\xC_{x_1}+h_{22}^2\xC_{x_2}|^2+b_2\leq |\xC_{x_1}|^2, \label{C:ineq2} \\
& |\xC_{x_1}|^2\leq C_{x_1}^{\star 2}, \label{C:ineq3} \\
 & |\xC_{x_2}|^2\leq C_{x_2}^{\star 2},\label{C:ineq4}
\end{align}
\begin{align}
\text{where } a_1&=\frac{C_{s_1}^2}{\beta_1 C_{y_1}^2 |h_{12}|^4}, \ &&b_1=\frac{(1-1/\beta_1)C_{s_1}^2}{|h_{12}|^4}, \notag \\
a_2&=\frac{C_{s_2}^2}{\beta_2 C_{y_2}^2 |h_{21}|^4}, \ &&b_2=\frac{(1-1/\beta_2)C_{s_2}^2}{|h_{21}|^4}, \notag \\
\beta_1&=e^{2\alpha(R-r^{\star})}, &&\beta_2=e^{2(1-\alpha)(R- r^{\star})}.\notag
\end{align}
Since $R^{\star}\geq r^{\star}$, we can assume that $R\geq r^{\star}$ without loss of optimality. Then it follows that $\beta_1\geq 1, \beta_2\geq 1, b_1\geq 0$ and $b_2\geq 0$.

\begin{figure}
\centering
\includegraphics[width=3.3in, height=2.5in]{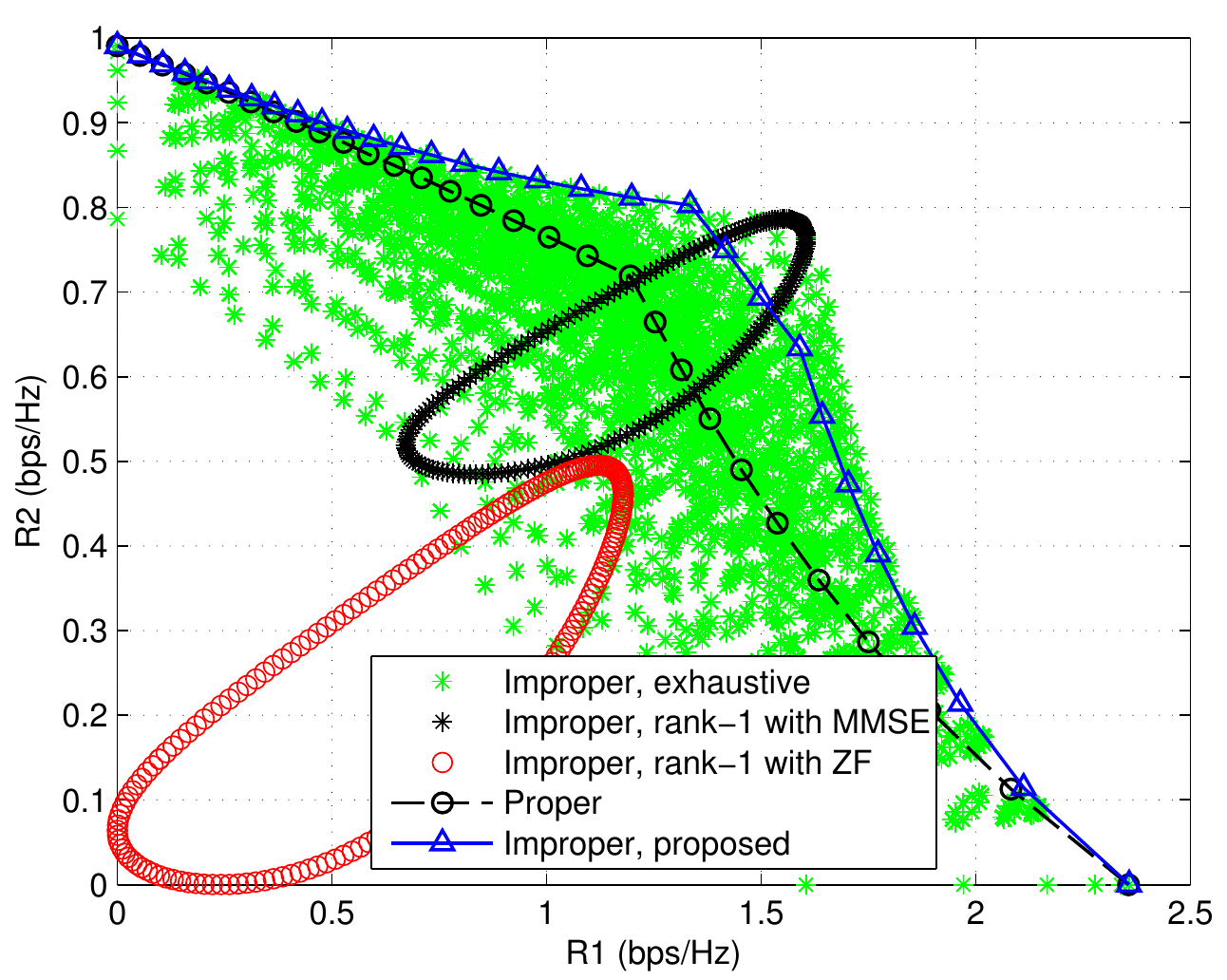}
\caption{Achievable rate region for SNR = 0 dB}\label{F:region0dB}
\end{figure}

(P1-b') is non-convex and hence cannot be solved directly. Next, we show that it can be efficiently solved via solving a finite number of SOCP problems. First, observe that an arbitrary common phase rotation can be added to both pseudo-covariances without affecting the feasibility of (P1-b'). That is, if $\{\xC_{x_1},\xC_{x_2}\}$ is feasible for (P1-b'), then so is  $\{\xC_{x_1}e^{j\omega},\xC_{x_2}e^{j\omega}\}$. Therefore, without loss of generality, we may choose $\omega$ so that $\xC_{x_1}$ is real and nonnegative.
 Denote the magnitude and phase of $\xC_{x_2}$ by $t$ and $\theta$, respectively, i.e., $\xC_{x_2}=te^{j\theta}$. Then for any fixed value of $\theta$, (P1-b') can be transformed into a SOCP feasibility problem given by
\begin{align}
\text{(P1-b''):} \quad \underset{\xC_{x_1},t}{\text{min.}} \quad & 0 \notag \\
\text{s.t.} \quad & \bigg\|\begin{matrix} \sqrt{a_1}(h_{11}^2\Re\{\xC_{x_1}\}+h_{12}^2te^{j\theta}) \\ \sqrt{b_1} \end{matrix} \bigg\| \leq t, \notag \\
& \bigg\|\begin{matrix} \sqrt{a_2}(h_{21}^2\Re\{\xC_{x_1}\}+h_{22}^2te^{j\theta}) \\ \sqrt{b_2} \end{matrix} \bigg\| \leq \Re\{\xC_{x_1}\}, \notag \\
& \Im\{\xC_{x_1}\}=0,\  \Re\{\xC_{x_1}\}\leq C_{x_1}^{\star}, \ t\leq C_{x_2}^{\star}. \notag
\end{align}
(P1-b'') is  convex and hence can be efficiently solved with the standard interior point algorithm \cite{202}, or by existing software tools such as \texttt{CVX} \cite{227}.
\begin{theorem}\label{T:thetaRegion}
The feasibility problem (P1-b') can be optimally solved by solving a finite number of  SOCP problems (P1-b''), each for a fixed value $\theta$, where $\theta$ can be restricted to the following discrete set:
\begin{align}
\theta \in \{\pi+2(\phi_{11}-\phi_{12}), \pi+2(\phi_{21}-\phi_{22})\}\cup \Theta_{\mathcal{A}}\cup \Theta_{\mathcal{B}}, \notag
\end{align}
where $\Theta_{\mathcal{A}}$ and $\Theta_{\mathcal{B}}$ are the solution sets for $\theta$ to the following equations:
 \begin{align}
 \Theta_{\mathcal{A}}:
\begin{cases}\label{E:thetaA}
a_1|h_{11}^2C_{x_1}^{\star}+h_{12}^2te^{j\theta}|^2+b_1=t^2 \\
a_2|h_{21}^2C_{x_1}^{\star}+h_{22}^2te^{j\theta}|^2+b_2=C_{x_1}^{\star 2}\\
\end{cases}
\end{align}
\begin{align}
\Theta_{\mathcal{B}}:
\begin{cases}\label{E:thetaB}
a_1|h_{11}^2X_1+h_{12}^2C_{x_2}^{\star}e^{j\theta}|^2+b_1=C_{x_2}^{\star 2} \\
a_2|h_{21}^2X_1+h_{22}^2C_{x_2}^{\star}e^{j\theta}|^2+b_2=X_1^{ 2}\\
\end{cases}
\end{align}
\end{theorem}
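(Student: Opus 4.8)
The plan is to first explain why fixing $\theta$ collapses the nonconvex (P1-b$'$) into the convex SOCP (P1-b$''$), and then to argue that the single remaining scalar degree of freedom, the phase $\theta$, only has to be tried at finitely many values. After the common phase-rotation normalization already noted above ($\xC_{x_1}=\Re\{\xC_{x_1}\}\ge 0$, $\xC_{x_2}=te^{j\theta}$), each of \eqref{C:ineq1}--\eqref{C:ineq2} becomes a second-order cone constraint in $(\xC_{x_1},t)$ once $\theta$ is frozen, and \eqref{C:ineq3}--\eqref{C:ineq4} are linear; hence (P1-b$''$) is an \emph{exact} reformulation, not a relaxation, of (P1-b$'$) at that $\theta$. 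By the rotation invariance, (P1-b$'$) is feasible if and only if the set $\Theta^{\star}=\{\theta:\text{(P1-b}''\text{) is feasible}\}$ is nonempty. Since the constraint set in $(\xC_{x_1},t,\theta)$ is compact, $\Theta^{\star}$ is a \emph{closed} subset of the circle, and the whole task reduces to showing that whenever $\Theta^{\star}\neq\emptyset$ it must intersect the stated candidate set.

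Next I would isolate how $\theta$ enters the two coupled constraints. Expanding the squared magnitudes, the left-hand side of \eqref{C:ineq1} depends on $\theta$ only through $\cos(2\phi_{11}-2\phi_{12}-\theta)$ and is monotone in it, so \eqref{C:ineq1} is \emph{maximally relaxed} exactly at $\theta=\pi+2(\phi_{11}-\phi_{12})$, where the phasors $h_{11}^2\xC_{x_1}$ and $h_{12}^2\xC_{x_2}$ are in antiphase; likewise \eqref{C:ineq2} is maximally relaxed at $\theta=\pi+2(\phi_{21}-\phi_{22})$. These two values are precisely the isolated points in the candidate set, and they are the natural targets toward which $\theta$ should be driven to gain feasibility.

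The core is then a continuation (sliding) argument. Starting from any feasible $(\xC_{x_1},\xC_{x_2})$, I would slide $\theta$ toward $\pi+2(\phi_{11}-\phi_{12})$, which keeps \eqref{C:ineq1} satisfied because it is being relaxed. Either $\theta$ reaches that fixed point with \eqref{C:ineq2}--\eqref{C:ineq4} still satisfied, in which case a fixed-point candidate is feasible and we are done, or the motion is blocked when \eqref{C:ineq2} first becomes tight. In the blocked case I would continue to advance $\theta$ while simultaneously adjusting $(\xC_{x_1},t)$ so as to hold \eqref{C:ineq2} at equality and keep \eqref{C:ineq1} feasible, tracing a one-parameter feasible path. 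Such a path can terminate only when a power constraint also becomes tight: $|\xC_{x_1}|=C_{x_1}^{\star}$ together with \eqref{C:ineq1}--\eqref{C:ineq2} at equality, which is exactly the system \eqref{E:thetaA} defining $\Theta_{\mathcal{A}}$, or $|\xC_{x_2}|=C_{x_2}^{\star}$ together with \eqref{C:ineq1}--\eqref{C:ineq2} at equality, which is the system \eqref{E:thetaB} defining $\Theta_{\mathcal{B}}$. Finiteness of $\Theta_{\mathcal{A}}$ and $\Theta_{\mathcal{B}}$ then follows because each is a pair of trigonometric-polynomial equations in the two reals $(\theta,t)$ or $(\theta,\xC_{x_1})$; substituting $z=e^{j\theta}$ and eliminating the magnitude reduces each to a single polynomial equation in $z$ on the unit circle, which has finitely many roots.

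The hard part will be making this continuation both rigorous and exhaustive. I must check that the path along which \eqref{C:ineq2} stays tight is well defined (that the boundary curve is regular so $\theta$ can actually be advanced), and, crucially, that the path cannot terminate at a configuration outside the candidate list. The two dangerous degenerate endpoints are an interior tangency of the two constraint boundaries (both \eqref{C:ineq1}--\eqref{C:ineq2} tight, no power constraint active, gradients parallel) and a tangency of a single constraint boundary to a power-box edge (one of \eqref{C:ineq1}--\eqref{C:ineq2} tight together with one power constraint, the other slack); neither lies in $\{\pi+2(\phi_{11}-\phi_{12}),\,\pi+2(\phi_{21}-\phi_{22})\}\cup\Theta_{\mathcal{A}}\cup\Theta_{\mathcal{B}}$. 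The delicate case analysis on which the whole claim rests is to show that whenever such an endpoint bounds an arc of $\Theta^{\star}$, that arc must nonetheless contain one of the two fixed points, so that a candidate is still feasible. I would finally dispose of the two easy extremes separately: if $\Theta^{\star}=\emptyset$ the problem is infeasible and every candidate SOCP is infeasible (consistent), and if $\Theta^{\star}$ is the entire circle then both fixed points are trivially feasible and are detected.
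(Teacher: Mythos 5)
There is a genuine gap, and you have named it yourself: the entire theorem rests on showing that the candidate set $\{\pi+2(\phi_{11}-\phi_{12}),\,\pi+2(\phi_{21}-\phi_{22})\}\cup\Theta_{\mathcal{A}}\cup\Theta_{\mathcal{B}}$ is \emph{exhaustive}, and your continuation argument leaves precisely that step open. The sliding path is only guaranteed to terminate at one of the listed configurations if you can rule out (or absorb) the two degenerate endpoints you identify --- an interior tangency of the boundaries of \eqref{C:ineq1} and \eqref{C:ineq2} with both power constraints slack, and a tangency of a single constraint boundary with a power-box edge. You state that ``the delicate case analysis on which the whole claim rests'' is to show that an arc of $\Theta^{\star}$ bounded by such an endpoint still contains a candidate point, but this analysis is never carried out; you also do not establish the regularity (via an implicit-function-type argument) needed for the path holding \eqref{C:ineq2} at equality to be well defined. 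As written, the proposal is a plan whose crux is deferred, so the theorem is not proved.

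The paper avoids path-tracing entirely by a Lagrange-multiplier (KKT) argument, and it is worth seeing how that disposes of exactly your problem cases. First, a scaling proposition shows that any feasible point of (P1-b$'$) can be replaced by one at which $|\xC_{x_1}|=C_{x_1}^{\star}$ or $|\xC_{x_2}|=C_{x_2}^{\star}$ (multiply both pseudo-covariances by $\tau=\min\{C_{x_1}^{\star}/|\xC_{x_1}|,\,C_{x_2}^{\star}/|\xC_{x_2}|\}>1$; since $b_1,b_2\ge 0$, feasibility of \eqref{C:ineq1}--\eqref{C:ineq2} is preserved). This removes your ``interior tangency'' worry at the outset. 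Then, writing the stationarity conditions of the Lagrangian in $\xC_{x_1}^{*}$ and $\xC_{x_2}^{*}$, the resulting identities \eqref{E:X2X1}--\eqref{E:X2X1Another} force the phase of $\xC_{x_2}$ algebraically in each case of active constraints: if only \eqref{C:ineq1} is active, $\xC_{x_2}$ must be antiparallel to $a_1h_{12}^{2*}h_{11}^{2}$, giving $\theta=\pi+2(\phi_{11}-\phi_{12})$; symmetrically for \eqref{C:ineq2} alone; if both are active, the point satisfies the equality systems \eqref{E:thetaA} or \eqref{E:thetaB} (with a power constraint active by the scaling step); and if neither is active the solution reduces to the proper-signaling case. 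Your ``single constraint tangent to a power-box edge'' endpoint thus falls into Case I or II, where stationarity pins $\theta$ to one of the two antiphase values --- no separate tangency analysis is needed. Your observations that fixing $\theta$ makes (P1-b$''$) an exact SOCP reformulation, and that $\Theta_{\mathcal{A}},\Theta_{\mathcal{B}}$ are finite (polynomial after eliminating the magnitude), do agree with the paper; the missing piece is the multiplier-based case analysis that replaces your unproven continuation step.
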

\begin{proof}
Please refer to Appendix~\ref{A:thetaRegion}
\end{proof}

Theorem~\ref{T:thetaRegion} can be intuitively interpreted as follows. For the feasibility problem (P1-b'), if the constraint \eqref{C:ineq1} is more ``restrictive'' than \eqref{C:ineq2}, then $\theta$ should have a value such that the left hand side of \eqref{C:ineq1} is minimized. This corresponds to $\theta=\pi+2(\phi_{11}-\phi_{12})$ so that $h_{11}^2\xC_{x_1}$ and $h_{12}^2\xC_{x_2}$ are antiphase. Similar interpretation for $\theta=\pi+2(\phi_{21}-\phi_{22})$ can be made. If both \eqref{C:ineq1} and \eqref{C:ineq2} are equally ``restrictive'', a feasible solution tends to make both constraints satisfied with equality, as given by \eqref{E:thetaA} and \eqref{E:thetaB}. 
The elements in $\Theta_{\mathcal{A}}$ and $\Theta_{\mathcal{B}}$ can be easily obtained as shown in Appendix~\ref{A:thetas}. 

\section{Numerical Results}\label{S:simulation}
In Fig.~\ref{F:region0dB} and Fig.~\ref{F:region10dB}, the achievable rate regions (prior to any time-sharing of achievable rate-pairs) of the proposed improper Gaussian signaling scheme at SNR=$0$ versus  $10$ dB are compared with other schemes, including the optimal proper Gaussian signaling scheme presented in Section~\ref{S:covarianceOpt}, the  optimal improper Gaussian signaling obtained with the exhaustive search method \cite{256}, and the  rank-1 scheme with both zero-forcing (ZF) and minimum-mean-square error (MMSE) beamforming \cite{255}. 
The channel matrix for both plots is given by $\mathbf{H}=\Big[\begin{matrix} 1.5718 - 1.2863i & -1.2984 + 0.7032i\\
  -0.2847 + 0.6700i  & 0.7802 - 0.6151i \end{matrix} \Big]$. The $(r,t)$th element of $\mathbf{H}$ is $h_{rt}$, which is the channel coefficient from transmitter $t$ to receiver $r$. 
  Both figures reveal that the achievable rate regions have been significantly enlarged with improper Gaussian signaling. It is also observed that our proposed algorithm, albeit being sub-optimal due to the separate optimization of covariances and pseudo-covariances,  performs quite close to the optimal improper Gaussian signaling by the exhaustive search. Another interesting observation is that for this particular channel realization, the Pareto boundary points of the achievable rate region with time-sharing (by taking the convex-hull operation over all the achievable rate-pairs without time-sharing) with improper Gaussian signaling can be obtained by the time-sharing between the two single-user maximum rate points, and the north-east rate corner point of the rank-1 scheme with MMSE beamforming.   
\begin{figure}
\centering
\includegraphics[width=3.3in, height=2.5in]{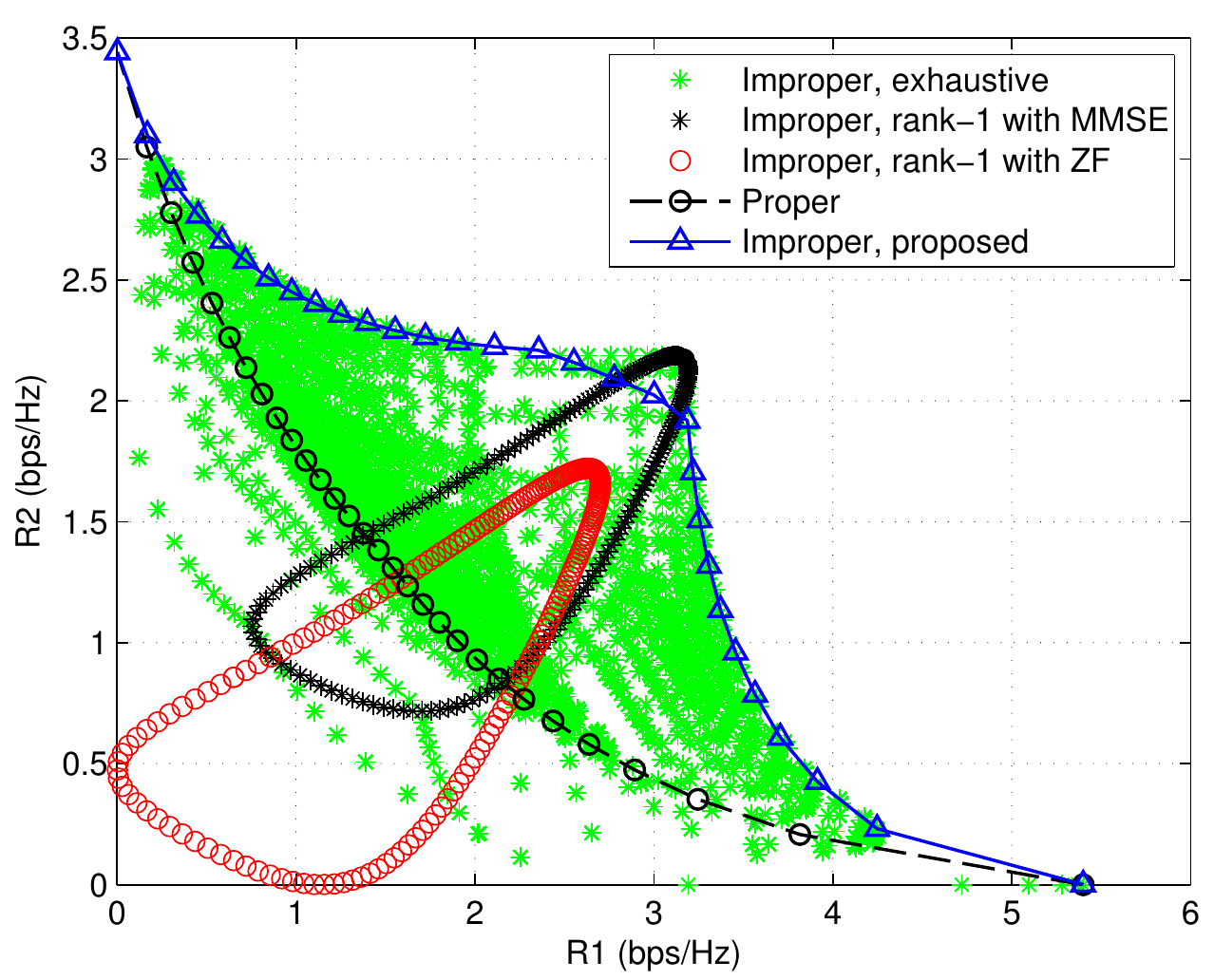}
\caption{Achievable rate region for SNR = 10 dB}\label{F:region10dB}
\end{figure}
\section{Conclusion}\label{S:conclusions}
This paper studies the achievable rate region of the two-user SISO Gaussian IC, when the improper Gaussian signaling is applied. 
 It is shown that the achievable rate can be expressed as a summation of the conventional rate expression by  proper Gaussian signaling, and an additional term that is a function of both the input covariances and pseudo-covariances.
 An efficient algorithm is proposed to obtain enlarged achievable rate regions by optimizing the covariances and pseudo-covariances separately with improper Gaussian signalling.
\appendices
\section{Proof of Theorem~\ref{T:thetaRegion}}\label{A:thetaRegion}
For notational convenience, in the sequel, we use $X_1$ and $X_2$ to denote $\xC_{x_1}$ and $\xC_{x_2}$, respectively. First, the following proposition shows that to solve (P1-b'), we may consider exterior solutions only, i.e., the solutions at which at least one of the inequality constraints is active.
\begin{proposition}\label{Pr:active}
If $\{X_1,X_2\}$ is feasible to (P1-b') with $|X_1|<C_{x_1}^{\star}$ and $|X_2|<C_{x_2}^{\star}$, then there exists another feasible solution $\{X_1', X_2'\}$ with $|X_1'|=C_{x_1}^{\star}$ or $|X_2'|=C_{x_2}^{\star}$.
\end{proposition}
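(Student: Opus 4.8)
The plan is to prove the proposition by a \emph{radial scaling} argument: starting from a strictly interior feasible point, I would inflate it toward the box boundary along the ray through the origin and show that feasibility of the two coupling constraints \eqref{C:ineq1} and \eqref{C:ineq2} is preserved throughout. The underlying structural fact I would exploit is that the only obstruction to scaling up is the additive constants $b_1,b_2$, and these enter the constraints with a favourable sign.

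Concretely, given a feasible $\{X_1,X_2\}$ with $|X_1|<C_{x_1}^{\star}$ and $|X_2|<C_{x_2}^{\star}$, I would set $X_i'=c\,X_i$ for a common real scalar $c\geq 1$ and check each constraint. The crucial observation is that both the quadratic form $a_1|h_{11}^2X_1+h_{12}^2X_2|^2$ on the left of \eqref{C:ineq1} and the term $|X_2|^2$ on its right are homogeneous of degree two in $(X_1,X_2)$, so each picks up a factor $c^2$, whereas $b_1\geq 0$ is unaffected. Rearranging \eqref{C:ineq1} as $a_1|\cdots|^2\leq |X_2|^2-b_1$ and multiplying by $c^2$ gives
\begin{align}
a_1c^2|\cdots|^2+b_1\leq c^2|X_2|^2-(c^2-1)b_1\leq c^2|X_2|^2, \notag
\end{align}
where the last step uses $c\geq 1$ and $b_1\geq 0$; hence \eqref{C:ineq1} survives for $\{X_1',X_2'\}$, and the identical manipulation with $b_2\geq 0$ preserves \eqref{C:ineq2}. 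The box constraints \eqref{C:ineq3}--\eqref{C:ineq4} become $c^2|X_1|^2\leq C_{x_1}^{\star 2}$ and $c^2|X_2|^2\leq C_{x_2}^{\star 2}$, so choosing
\begin{align}
c=\min\Big(\frac{C_{x_1}^{\star}}{|X_1|},\ \frac{C_{x_2}^{\star}}{|X_2|}\Big)>1 \notag
\end{align}
keeps both box constraints satisfied while driving at least one of them to equality. The resulting $\{X_1',X_2'\}$ is then the desired feasible point with an active box constraint.

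The one step that needs care, and which I expect to be the main obstacle, is the well-definedness of this scaling factor, i.e. ruling out $X_1=0$ or $X_2=0$ so that the ratios above are finite and $c>1$. Here I would invoke $b_1,b_2>0$, which holds in the relevant regime $R>r^{\star}$ (for $\alpha\in(0,1)$): then \eqref{C:ineq1} forces $|X_2|^2\geq b_1>0$ and \eqref{C:ineq2} forces $|X_1|^2\geq b_2>0$, so \emph{any} feasible point automatically has both components nonzero and $c$ is finite and strictly greater than one. Thus the nondegeneracy bookkeeping, rather than the scaling estimate itself, is the delicate point; the boundary case $R=r^{\star}$, where $b_1=b_2=0$ and the origin is feasible, is the trivial proper-signaling baseline and may be excluded without loss of generality.
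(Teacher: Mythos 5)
Your proof is correct and is essentially the paper's own argument: scale both pseudo-covariances by the common factor $\tau=\min\big\{C_{x_1}^{\star}/|X_1|,\,C_{x_2}^{\star}/|X_2|\big\}$ and use degree-two homogeneity together with $b_1\geq 0$, $b_2\geq 0$ to show the two coupling constraints \eqref{C:ineq1}--\eqref{C:ineq2} survive the inflation while one box constraint becomes active. If anything, you are more careful than the paper, whose proof defines $\tau$ without ruling out $X_1=0$ or $X_2=0$; your observation that $b_1,b_2>0$ (in the regime $R>r^{\star}$ with $\alpha\in(0,1)$) forces any feasible point to have both components nonzero closes that small gap.
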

\begin{proof}
Let $\tau\triangleq \min\big\{\frac{C_{x_1}^{\star}}{|X_1|}, \frac{C_{x_2}^{\star}}{|X_2|}\big\}$. Then $\tau >1$. Define $X_1'=\tau X_1, X_2' =\tau X_2$. Then it is easy to verify that the constraints in \eqref{C:ineq3} and \eqref{C:ineq4} are satisfied, i.e., $|X_1'|\leq C_{x_1}^{\star}$ and $|X_2'|\leq C_{x_2}^{\star}$. Furthermore, at least one of them are satisfied with equality. The constraint in \eqref{C:ineq1} is also satisfied since
\begin{align}
&a_1|h_{11}^2X_1'+h_{12}^2X_2'|^2+b_1
=\tau^2a_1|h_{11}^2X_1+h_{12}^2X_2|^2+b_1 \notag \\
&\overset{(a)}{\leq} \tau^2 (a_1|h_{11}^2X_1+h_{12}^2X_2|^2+b_1)\overset{(b)}{\leq}  \tau^2|X_2|^2=|X_2'|^2, \notag
\end{align}
where $(a)$ is satisfied since $\tau>1$ and $b_1\geq 0$, $(b)$ is true since $\{X_1, X_2\}$ is feasible to (P1-b').
Similarly, \eqref{C:ineq2} is also satisfied.  Therefore, $\{X_1', X_2'\}$ is a feasible solution to (P1-b') with at least one of the inequality constraints being active.
\end{proof}

  Next, we derive Theorem~\ref{T:thetaRegion} using the KKT conditions, which are necessary optimality conditions for the constrained optimization problem (P1-b') \cite{202}. For notational convenience, denote the inequality constraints \eqref{C:ineq1}--\eqref{C:ineq4} by $f_1\leq 0, f_2\leq 0, h_1\leq 0$ and $ h_2\leq 0$, respectively.  Denote $\lambda_1, \lambda_2, \mu_1, \mu_2$ as the corresponding dual variables, respectively. Then the Lagrangian function of (P1-b') is given by
  \begin{align}
L(&X_1,  X_2, \lambda_1, \lambda_2, \mu_1, \mu_2)=\lambda_1 \Big\{a_1 \big(|h_{11}|^4X_1^*X_1+|h_{12}|^4X_2^*X_2 \notag \\
 &+2\Re \{h_{11}^{2*}h_{12}^2X_1^*X_2\}\big) +b_1-X_2^*X_2\Big\}
+\lambda_2\Big\{a_2 \big(|h_{21}|^4X_1^*X_1 \notag \\
&+|h_{22}|^4X_2^*X_2
+2\Re \{h_{21}^{2*}h_{22}^2X_1^*X_2\}\big)
+b_2-X_1^*X_1\Big\} \notag \\
&+\mu_1(X_1^*X_1-C_{x_1}^{\star 2}) +\mu_2(X_2^*X_2-C_{x_2}^{\star 2}). \label{E:Lagrangian}
\end{align}

For $\{X_1, X_2\}$ to be a solution to (P1-b'), the following KKT conditions must be satisfied \cite{202}: 
\begin{enumerate}
\item{Dual feasibility: $\lambda_1\geq 0, \  \lambda_2\geq 0, \ \mu_1\geq 0, \  \mu_2\geq 0.$
}
\item{Zero derivative: The derivatives of the Lagrangian function \eqref{E:Lagrangian} with respect to the primal variables are zero:
    \begin{align}
    \frac{\partial L}{\partial X_2^*}=0 \Rightarrow & -X_2 \underbrace{\big [ (a_1|h_{12}|^4-1)\lambda_1+\lambda_2a_2|h_{22}|^4+\mu_2 \big]}_{c_2}\notag \\ &=X_1(\lambda_1\underbrace {a_1h_{12}^{2*}h_{11}^2}_{V_1}+\lambda_2\underbrace{a_2h_{22}^{2*}h_{21}^2}_{V_2})\notag
    \end{align}
    \begin{align}
\frac{\partial L}{\partial X_1^*}=0 \Rightarrow & X_1 \underbrace{\big [ (a_2|h_{21}|^4-1)\lambda_2+\lambda_1a_1|h_{11}|^4+\mu_1 \big]}_{c_1}\notag \\ &=-X_2(\lambda_1\underbrace {a_1h_{11}^{2*}h_{12}^2}_{V_1^*}+\lambda_2\underbrace{a_2h_{21}^{2*}h_{22}^2}_{V_2^*}),\notag \\
& \Downarrow \notag \\
 -X_2c_2&=X_1(\lambda_1V_1+\lambda_2V_2), \label{E:X2X1}\\
 X_1c_1&=-X_2(\lambda_1V_1^*+\lambda_2V_2^*). \label{E:X2X1Another}
    \end{align}
    }
    \item{Complementary slackness: $\lambda_1f_1= 0,  \lambda_2 f_2= 0, \mu_1 h_1= 0, \mu_2 h_2= 0.$ 
        }
\end{enumerate}
As discussed previously, without loss of generality, $X_1$ can be assumed to be a nonnegative real number. The following cases are then considered to derive the possible phases of $X_2$:
\begin{itemize}
\item{Case I: $f_1=0$ and $f_2\neq 0$. Then from the complementary slackness condition, $\lambda_1>0$ and $\lambda_2=0$. Substituting them into \eqref{E:X2X1Another}, we have
    \[ X_2=-\frac{X_1(\lambda_1a_1|h_{11}|^4+\mu_1)}{\lambda_1|V_1|^2}V_1.\]
     Since $\lambda_1\geq 0$, $\mu_1\geq 0$, $a_1\geq 0$ and $X_1\geq 0$, the phase $\theta$ of $X_2$ equals to that of $V_1$ rotated by $\pi$, which is $\pi+2(\phi_{11}-\phi_{12})$ since $V_1=a_1h_{12}^{2*}h_{11}^2$.}
\item{Case II: $f_1\neq 0$ and $f_2= 0$. Then $\lambda_1=0$ and $\lambda_2>0$. 
     Similarly, by using \eqref{E:X2X1}, 
     we have $\theta=\pi+2(\phi_{21}-\phi_{22})$.}
\item{Case III: $f_1\neq 0$ and $f_2 \neq 0$, then $\lambda_1=0$ and $\lambda_2=0$. By substituting them into \eqref{E:X2X1} and \eqref{E:X2X1Another}, we have $\mu_2X_2=0$ and $\mu_1X_1=0$. This means either $X_2=0$, $X_1=0$, or $\mu_1=0$, $\mu_2=0$. The former case is trivially corresponding to proper Gaussian signaling. Furthermore, Proposition~\ref{Pr:active} suggests that we may consider the exterior solutions only, i.e., either $h_1=0$ or $h_2=0$ is satisfied. Thus, $\mu_1>0$ or $\mu_2>0$ can be assumed. Therefore, case III can be ignored without loss of optimality.}
\item{Case IV: $f_1=0$ and $f_2=0$, then $\lambda_1>0$ and $\lambda_2>0$. In this case, $\theta$ belongs to the solution set for the equations given in \eqref{E:thetaA} and \eqref{E:thetaB}, which are obtained by satisfying the constraints in \eqref{C:ineq1} and \eqref{C:ineq2} with equality. \eqref{E:thetaA} and \eqref{E:thetaB} correspond to $|X_1|=C_{x_1}^{\star}$ and $|X_2|=C_{x_2}^{\star}$, respectively, which can be assumed without loss of generality due to Proposition~\ref{Pr:active}.}
\end{itemize}
This completes the proof of Theorem~\ref{T:thetaRegion}.
\section{Solving $\Theta_{\mathcal{A}}$ and $\Theta_{\mathcal{B}}$ in Theorem~\ref{T:thetaRegion}}\label{A:thetas}
In this appendix, we show the steps to solve  $\Theta_{\mathcal{A}}$. $\Theta_{\mathcal{B}}$ can be obtained similarly. The unknown variables in \eqref{E:thetaA} are $\theta$ and $t$. 
 After some manipulations, \eqref{E:thetaA} can be written as
\begin{align}
 t &\cos \eta + d_1t^2+d_2=0 \label{E:eq1}\\
 t &\cos(\eta+\omega)+d_3t^2+d_4=0 \label{E:eq2}
 \end{align}
 where
 \begin{align}
&\omega \triangleq 2(\phi_{22}+\phi_{11}-\phi_{12}-\phi_{21}), \ \eta \triangleq \theta+2(\phi_{12}-\phi_{11}) \label{E:thetaEta}\\
&d_1\triangleq \frac{a_1|h_{12}|^4-1}{2a_1|h_{11}|^2|h_{12}|^2C_{x_1}^{\star}}, \
d_2\triangleq \frac{a_1|h_{11}|^4C_{x_1}^{\star 2}+b_1}{2a_1|h_{11}|^2|h_{12}|^2C_{x_1}^{\star}}, \notag \\
&d_3\triangleq \frac{|h_{22}|^2}{2|h_{21}|^2C_{x_1}^{\star}},\qquad
d_4=\frac{(a_2|h_{21}|^4-1)C_{x_1}^{\star 2}+b_2}{2a_2|h_{21}|^2|h_{22}|^2C_{x_1}^{\star}}.\notag
\end{align}
From \eqref{E:eq2}, we have
\begin{align}
& t\sin\eta\sin\omega=t\cos\eta\cos\omega+d_3t^2+d_4 \Rightarrow \notag \\
&  t^2(1-\cos^2\eta)\sin^2\omega=(t\cos\eta\cos\omega+d_3t^2+d_4)^2 \label{E:eq3}
\end{align}
Solving $\cos \eta$ from \eqref{E:eq1}, we have
\begin{align}
\cos \eta=-(d_1t^2+d_2)/t. \label{E:eq4}
\end{align}
Substituting \eqref{E:eq4} into \eqref{E:eq3} gives the following fourth order polynomial equation with respect to $t$:
\begin{align}
 [t^2 -(d_1t^2+d_2)^2]\sin^2\omega=[(d_3-d_1\cos\omega)t^2+d_4-d_2\cos\omega]^2\notag
\end{align}
Since the above equation only has $t^2$ terms, it can be transformed to the following quadratic equation by setting $z=t^2$,
\begin{align}
&e_1z^2+e_2z+e_3=0,\\
\text{where }
e_1=&d_3^2+d_1^2-2d_1d_3\cos\omega, \notag \\
e_2=&2(d_1d_2+d_3d_4)-2(d_1d_4+d_2d_3)\cos\omega-\sin^2\omega,\notag \\
e_3=&d_2^2+d_4^2-2d_2d_4\cos\omega.\notag
\end{align}
Then $z$ can be easily solved. Since $z=t^2$ and $t$ is the magnitude of $\xC_{x_2}$, only the solutions of $z$ that are real and satisfy $0 \leq z \leq C_{x_2}^{\star 2}$ need to be kept, whereby the values for $t$ are obtained. For those values of $t$ satisfying $|(d_1t^2+d_2)/t|\leq 1$, we can get the value for $\eta$ based on \eqref{E:eq4}, i.e., $\eta=\arccos [-(d_1t^2+d_2)/t]$ or $\eta=2\pi-\arccos [-(d_1t^2+d_2)/t]$. Then $\theta$ can be obtained from \eqref{E:thetaEta}. If no such solutions exist, then $\Theta_{\mathcal{A}}$ is set to empty. 

\bibliographystyle{IEEEtran}
\bibliography{IEEEabrv,IEEEfull}

\end{document}